\newtheorem{theorem}{Theorem}
\begin{document}

\title{Stability Protected Phantom Bound in Expansion Modulated Cosmology}

\author{Prasanta Sahoo}
\email{prasantmath123@yahoo.com}
\affiliation{Midnapore College (Autonomous), Midnapore, West Bengal 721101, India}

\date{\today}

\begin{abstract}
Recent cosmological observations, including DESI Data Release 2 (DR2) \cite{DESIDR2}, allow for mild redshift evolution of the dark energy equation of state (EoS), motivating renewed interest in the phantom regime ($w<-1$). A no-go result is presented for a class of single field effective scalar cosmologies with Hubble modulated kinetic response, as motivated by infrared modified and nonlocal gravitational frameworks \cite{DeserWoodard2007,Maggiore2014}. Imposing ghost freedom, $\mathcal{M}\equiv \partial\rho_\phi/\partial X>0$, renders the phantom divide ($w_\phi=-1$) an invariant and dynamically stable manifold of the cosmological flow, extending standard kinematical no-go theorems to a dynamical systems framework. Continuous ghost free evolution into the phantom regime is forbidden, although $w_\phi\to -1$ can be approached asymptotically. The late time dynamics generically converge to a de~Sitter like attractor driven by expansion induced kinetic suppression rather than potential fine tuning. These results clarify the role of stability constraints in shaping late time cosmic acceleration in single field effective scalar cosmologies.
\end{abstract}

\maketitle

\let\oldthefootnote\thefootnote
\renewcommand{\thefootnote}{\textit{}}
\footnotetext{Corresponding author: \href{mailto:prasantmath123@yahoo.com}{prasantmath123@yahoo.com}}
\let\thefootnote\oldthefootnote

\section{Introduction}

Late time cosmic acceleration is supported by multiple independent observations, including Type~Ia supernovae, cosmic microwave background (CMB) anisotropies, and baryon acoustic oscillations (BAO) \cite{Riess1998,Perlmutter1999,Planck2018,Eisenstein2005}. The concordance $\Lambda$CDM model, in which cosmic acceleration is driven by a cosmological constant, remains consistent with current datasets. Nevertheless, the extreme fine tuning required to match the observed vacuum energy density and the coincidence problem motivate the exploration of dynamical dark energy scenarios \cite{Weinberg1989,Zlatev1999,Padmanabhan2003,Peebles2003,Carroll2001}.

Recent results from the Dark Energy Spectroscopic Instrument (DESI) Data Release 2 (DR2) \cite{DESIDR2}, combined with other late time probes, allow for a mild redshift evolution of the dark energy EoS parameter $w(z)$. This raises the question of whether cosmological dynamics can naturally approach $w\simeq -1$ within a theoretically consistent and stable framework, rather than requiring a fundamental cosmological constant \cite{Copeland2006,Tsujikawa2013}.

Scalar field models provide a minimal and well motivated framework for dynamical dark energy. Canonical quintessence models predict $w\geq -1$ and are free from ghost instabilities \cite{Ratra1988,Wetterich1988}. In contrast, the phantom regime ($w<-1$) typically requires a wrong sign kinetic term, leading to catastrophic vacuum instabilities \cite{Caldwell2002,Carroll2003}. Noncanonical $k$-essence theories extend the phenomenology by allowing nonlinear kinetic terms and attractor behavior \cite{ArmendarizPicon2001,Chiba2000,Garriga1999}. However, it is well known that continuous, ghost free crossing of the phantom divide is forbidden in broad classes of single field theories with local Lagrangians \cite{Vikman2005,Nojiri2005}. Phantom crossing generally requires additional propagating degrees of freedom, higher derivative operators, or non-minimal couplings \cite{Creminelli2009,Hu2004}.

In this work, a class of single field effective scalar cosmologies is investigated in which the kinetic response is modulated by the Hubble expansion rate. Such Hubble dependent kinetic structures arise naturally in infrared modified and nonlocal gravitational frameworks, where curvature dependent operators renormalize the effective matter sector dynamics on cosmological backgrounds \cite{DeserWoodard2007,Maggiore2014,Woodard2014}. The model is treated as an effective background level parametrization, capturing the infrared behavior of curvature modulated scalar dynamics without specifying a unique ultraviolet completion.

A central result of this work is a dynamical no-go theorem for phantom crossing in this class of models. Imposing ghost freedom, defined by the positivity of the kinetic response function $\mathcal{M}\equiv \partial\rho_\phi/\partial X>0$, renders the phantom divide $w_\phi=-1$ an invariant manifold of the background cosmological flow. Unlike standard kinematical no-go arguments, the phantom divide is shown to be dynamically selected as an asymptotically stable attractor in phase space. The late time dynamics generically converge toward a de~Sitter like attractor with $w_\phi\to -1$, driven by expansion induced kinetic suppression rather than finetuning of the scalar potential.

While phantom crossing remains possible in multi-field theories and in higher derivative or non-minimally coupled scalar tensor frameworks such as Horndeski, beyond Horndeski, and DHOST theories \cite{Deffayet2010,Gleyzes2015,Langlois2016,Bellini2014}, it is prohibited in the class of single field effective descriptions considered here under the assumptions of background energy conservation and ghost freedom. This framework therefore provides a dynamical interpretation of the observed proximity of dark energy to a cosmological constant and clarifies the role of stability constraints in shaping late time cosmic acceleration.

The paper is organized as follows. Section~\ref{sec:effective_framework} introduces the effective Hubble modulated scalar framework and derives the background equations and stability conditions. Section~\ref{sec:asymptotic_dynamics} analyzes the asymptotic late time dynamics and de~Sitter attractor behavior. Section~\ref{sec:covariant_origin} discusses the possible covariant origin of the effective model from nonlocal gravity. Section~\ref{sec:Background_Reformulation} reformulates the single field phantom no-go condition in a background dynamical framework and establishes the stability protected phantom bound. Section~\ref{sec:observational_illustration} provides an observational illustration of the background expansion history. Section~\ref{sec:perturbations} comments on perturbative stability and limitations of the effective description. Finally, Section~\ref{sec:conclusion} summarizes the results and outlines future directions.

\section{Effective Framework and Stability Bound}\label{sec:effective_framework}

Late time acceleration is analyzed at the homogeneous background level within scalar field cosmologies featuring a Hubble modulated kinetic response, while the gravitational sector is described by Einstein gravity. Deviations from $\Lambda$CDM arise solely from the scalar sector. The framework assumes a single effective homogeneous scalar degree of freedom. The analysis is restricted to spatially flat Friedmann--Lemaître--Robertson--Walker (FLRW) spacetimes \cite{Mukhanov2005}, with a homogeneous scalar field $\phi(t)$ and kinetic variable
\begin{equation}
X \equiv \frac{\dot{\phi}^{2}}{2}.
\end{equation}

The scalar sector is described by the effective background pressure
\begin{equation}
P_{\phi}(X,H,\phi)
=
X - V(\phi)
- \beta\,\frac{X^{2}}{H^{2}+X},
\label{eq:Peff}
\end{equation}
where $H=\dot a/a$ and $\beta>0$ parametrizes expansion induced suppression of the kinetic response. The parameter $\beta$ has mass dimension two and may be expressed as $\beta=\tilde{\beta}H_0^2$ with dimensionless $\tilde{\beta}$. Eq.~(\ref{eq:Peff}) is interpreted as an effective background level parametrization rather than a fundamental covariant Lagrangian. The explicit Hubble dependence is understood as arising after integrating out gravitational and nonlocal curvature degrees of freedom in an infrared modified effective description. The framework introduces no additional propagating degrees of freedom at the background level.

The corresponding energy density, defined by $\rho_{\phi}=2X P_{\phi,X}-P_{\phi}$, is
\begin{equation}
\rho_{\phi}
=
X + V(\phi)
- \beta\,\frac{X^{2}(3H^{2}+X)}{(H^{2}+X)^{2}} .
\label{eq:rhoeff}
\end{equation}

Background evolution satisfies the continuity equation $\dot{\rho}_{\phi}+3H(\rho_{\phi}+P_{\phi})=0$. Ghost freedom is imposed through positivity of the kinetic response function
\begin{equation}
\mathcal{M}
\equiv
\frac{\partial \rho_{\phi}}{\partial X}
=
1
- \beta\,\frac{X\left(6H^{4}+3H^{2}X+X^{2}\right)}{(H^{2}+X)^{3}},
\label{eq:ghost}
\end{equation}
which restricts the admissible phase space. The imposition of $\mathcal{M} > 0$ establishes a stability protected boundary in the cosmological phase space. Using the identity $\rho_{\phi} + P_{\phi} = 2X P_{\phi,X}$, the ghost free condition $P_{\phi,X} \geq 0$ acts as a constraint preventing the EoS parameter from entering the phantom regime in single field effective descriptions. For any nonvanishing kinetic energy $X > 0$, positivity of the kinetic response ensures that $\rho_{\phi} + P_{\phi} \geq 0$, rendering the phantom divide $w_{\phi} = -1$ a boundary of the physically admissible domain. Consequently, any continuous transition into $w_{\phi} < -1$ would require $P_{\phi,X} < 0$, signaling the onset of ghost instabilities and the breakdown of the effective single field description.

Within the ghost free domain, the combination $\rho_{\phi}+P_{\phi}=2X P_{\phi,X}\ge 0$, with equality attained only in the limit $X\to 0$. Consequently, the EoS parameter $w_{\phi}=P_{\phi}/\rho_{\phi}$ satisfies $w_{\phi}\ge -1$, identifying the phantom divide as a stability boundary.

\begin{theorem}[Stability protected phantom bound]
\label{thm:phantom_nogo}
Consider a spatially flat FLRW cosmology governed by Einstein gravity and containing a single homogeneous scalar degree of freedom with effective pressure $P_\phi=P(X,H,\phi)$. Assume:  
(i) background energy conservation;  
(ii) ghost free kinetic response, $\partial P_\phi/\partial X \ge 0$ and $\partial\rho_\phi/\partial X>0$;  
(iii) expansion induced suppression of the kinetic response, such that $\partial P_\phi/\partial X \to 0$ in the late time attractor regime for finite $X$.  

Then all ghost free cosmological solutions satisfy $w_\phi \ge -1$, with equality attained only asymptotically as $X\,\partial P_\phi/\partial X \to 0$. No continuous ghost free cosmological evolution can cross the phantom divide. Under assumption (iii), cosmological evolution drives $w_\phi \to -1$ as $t\to\infty$.
\end{theorem}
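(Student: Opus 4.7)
The plan is to reduce everything to the single k-essence identity $\rho_\phi + P_\phi = 2X\,P_{\phi,X}$, which follows directly from the defining relation $\rho_\phi = 2X P_{\phi,X} - P_\phi$ used in Eq.~(\ref{eq:rhoeff}). Because $X = \dot\phi^2/2 \ge 0$ automatically, assumption (ii) then delivers the pointwise inequality $\rho_\phi + P_\phi \ge 0$ throughout the ghost free phase space. Together with $\rho_\phi = 3H^2 > 0$ from the Friedmann constraint in any expanding epoch, dividing by $\rho_\phi$ yields $w_\phi \ge -1$, with equality iff $X\,P_{\phi,X} = 0$. This disposes of the first assertion.

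For the no-crossing clause, I would use the continuity equation (i), which via the identity reads $\dot\rho_\phi = -6HX\,P_{\phi,X}$. Along any ghost free trajectory, $\rho_\phi + P_\phi = 2X P_{\phi,X}$ is a continuous non-negative function of time, so it cannot change sign without first exiting the admissible phase space. The level set $\{w_\phi=-1\}$ is therefore an invariant boundary of the admissible region; a genuine transition to $w_\phi<-1$ would require transient negativity of $P_{\phi,X}$, violating (ii) and signalling a ghost. Reaching the boundary is only possible tangentially, either by $X\to 0$ (potential domination) or by $P_{\phi,X}\to 0$ (kinetic suppression), which is precisely the channel that assumption (iii) activates.

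For the asymptotic attraction I would combine Friedmann and continuity to obtain the Raychaudhuri-type relation $\dot H = -X\,P_{\phi,X}$. Under (iii), $P_{\phi,X}\to 0$ along the late time attractor, so for any finite $X$ one has $\rho_\phi+P_\phi\to 0$ and $\dot H\to 0$; hence $H\to H_\infty$ constant and $w_\phi\to -1$, yielding the de~Sitter like end state. The strict positivity $\mathcal{M}>0$ from (ii) ensures that this limit is approached through a nondegenerate flow in $(X,\phi,H)$ space rather than via a singular locus of the effective Lagrangian.

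The main obstacle I expect is making the no-crossing step genuinely dynamical rather than a tautological restatement of (ii). The delicate point is to rule out a flow that momentarily touches the boundary $2XP_{\phi,X}=0$ and then leaves with a component into $w_\phi<-1$; this I would control by showing that, on the boundary, the outward normal component of the flow is proportional to $-X\,P_{\phi,X}$, which is non-positive by (ii), so trajectories are either reflected back into the $w_\phi>-1$ interior or remain pinned asymptotically on the boundary. Verifying this carefully, and ensuring it holds uniformly on the admissible slice selected by $\mathcal{M}>0$, is the step that requires the most care.
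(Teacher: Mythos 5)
Your proposal is correct and reaches the same conclusions, but it takes a more elementary route to the central inequality than the paper does. You obtain $w_\phi\ge -1$ directly from the pointwise algebraic fact $\rho_\phi+P_\phi=2X\,P_{\phi,X}\ge 0$ combined with $\rho_\phi>0$, whereas the paper's proof packages the same content as an invariant-manifold statement: it defines $\Sigma=2X P_{\phi,X}$, shows $\dot\Sigma|_{\Sigma=0}=0$, and then posits a linearization $\dot P_{\phi,X}=-\lambda P_{\phi,X}$ with $\lambda>0$ (justified for the explicit model by $\partial P_{\phi,X}/\partial H<0$) to conclude exponential attraction onto the divide. Your treatment of the no-crossing and asymptotic clauses via $\dot H=-X P_{\phi,X}\le 0$ and $P_{\phi,X}\to 0$ is essentially the paper's Theorem~\ref{thm:universal_stability_PRL} argument imported into Theorem~\ref{thm:phantom_nogo}, which is a legitimate and arguably cleaner alternative. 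Notably, the ``main obstacle'' you flag --- that the no-crossing step risks being a tautological restatement of (ii), and that one must check the outward normal component of the flow on the boundary --- is precisely the point the paper's proof handles by \emph{assuming} the linearized form $\dot P_{\phi,X}=-\lambda P_{\phi,X}$; your proposed check of the boundary flux is the honest version of that step, so you have identified the genuine weak link rather than created a new one. Two small cautions: your step dividing by $\rho_\phi$ needs $\rho_\phi>0$, which holds under scalar domination via the Friedmann constraint but should be stated as a hypothesis if other components are present; and the conclusion $w_\phi\to-1$ additionally requires $\rho_\phi$ to remain bounded away from zero in the limit, which neither you nor the paper makes explicit.
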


\begin{proof}
The homogeneous background dynamics define an autonomous system $\dot{\mathbf{u}}=\mathbf{F}(\mathbf{u})$ with $\mathbf{u}=(X,\phi,H)$. Define the hypersurface function
\begin{equation}
\Sigma \equiv \rho_\phi + P_\phi = 2X P_{\phi,X}.
\end{equation}
The ghost free domain is characterized by $P_{\phi,X}\ge0$ and $\partial\rho_\phi/\partial X>0$. The hypersurface
\begin{equation}
\mathcal{S}:\quad \Sigma=0 \quad \Leftrightarrow \quad P_{\phi,X}=0
\end{equation}
corresponds to the phantom divide $w_\phi=-1$ for finite $X$.

A hypersurface $\mathcal{S}$ is invariant if the vector field is tangent to it, i.e.
\begin{equation}
\left.\frac{d\Sigma}{dt}\right|_{\mathcal{S}} = 0 .
\end{equation}
Using $\Sigma=2X P_{\phi,X}$, one obtains
\begin{equation}
\dot{\Sigma} = 2\dot{X} P_{\phi,X} + 2X \dot{P}_{\phi,X},
\end{equation}
and therefore
\begin{equation}
\left.\dot{\Sigma}\right|_{\mathcal{S}} = 2X\,\left.\dot{P}_{\phi,X}\right|_{P_{\phi,X}=0}.
\end{equation}

Assumption (iii) implies that $P_{\phi,X}(X,H,\phi)$ is continuously differentiable and that its dominant time dependence near the attractor is controlled by the Hubble rate. Linearizing the evolution of $P_{\phi,X}$ in a neighborhood of $\mathcal{S}$ yields
\begin{equation}
\dot{P}_{\phi,X} = -\lambda(X,\phi,H)\,P_{\phi,X} + \mathcal{O}(P_{\phi,X}^{2}),
\end{equation}
where $\lambda(X,\phi,H)>0$ for expanding solutions. For the explicit model in Eq.~(\ref{eq:Peff}), one has $\partial P_{\phi,X}/\partial H < 0$ for $H>0$ and $X>0$, implying $\lambda>0$ in the accelerating regime. Consequently,
\begin{equation}
\left.\dot{P}_{\phi,X}\right|_{P_{\phi,X}=0}=0,
\end{equation}
and hence $\left.\dot{\Sigma}\right|_{\mathcal{S}} = 0$, demonstrating that $\mathcal{S}$ is an invariant manifold of the cosmological flow.

Linearization around $\mathcal{S}$ gives
\begin{equation}
\delta \dot{P}_{\phi,X} = -\lambda\,\delta P_{\phi,X},
\end{equation}
implying exponential decay of perturbations toward $\mathcal{S}$. Since $P_{\phi,X}\ge0$ in the admissible phase space, trajectories cannot cross into $P_{\phi,X}<0$, implying $w_\phi\ge -1$ and $w_\phi\to -1$ asymptotically. The phantom divide therefore corresponds to a dynamically stable invariant manifold of the cosmological flow.
\end{proof}

The monotonic Hubble dependence of $P_{\phi,X}$ is realized explicitly in Hubble modulated kinetic effective theories and is interpreted as a structural property of the effective background description, which may arise after integrating out gravitational degrees of freedom in infrared modified frameworks.

\section{Asymptotic Dynamics}
\label{sec:asymptotic_dynamics}

The late time behavior follows from the Hubble modulated kinetic structure of the effective scalar sector. The background evolution is controlled by the hierarchy between the Hubble scale and the scalar kinetic scale.

At early times, when $H^{2}\gg X$, the expansion induced correction is subdominant and the effective pressure reduces to
\begin{equation}
P_{\phi}\simeq X - V(\phi) - \beta\,\frac{X^{2}}{H^{2}},
\end{equation}
so that the scalar field behaves approximately as a canonical quintessence field with small higher order kinetic corrections. In this regime, the scalar EoS tracks the dominant background component, yielding standard radiation  and matter dominated expansion histories \cite{Copeland1998,Tsujikawa2013}.

As the expansion proceeds and the Hubble rate decreases, the Hubble modulated kinetic term suppresses the kinetic contribution, dynamically driving the system toward $X\ll H^{2}$. In this limit, one obtains
\begin{equation}
P_{\phi}\to -V(\phi), 
\qquad
\rho_{\phi}\to V(\phi),
\end{equation}
implying
\begin{equation}
w_{\phi} \to -1 .
\end{equation}
This corresponds to a dynamically approached de~Sitter like attractor solution, which arises without requiring fine tuning of the scalar potential.

During the transition regime, $w_{\phi}$ can approach the phantom divide. However, Theorem~\ref{thm:phantom_nogo} enforces $\rho_{\phi}+P_{\phi}=2X P_{\phi,X}\ge0$ in the ghost free domain, which forbids any continuous crossing of $w_{\phi}=-1$. The phantom divide therefore constitutes a dynamically invariant stability boundary, and late time acceleration arises from kinetic suppression driving the system toward, but not beyond, the de~Sitter attractor.

\section{Covariant Origin from Nonlocal Gravity}\label{sec:covariant_origin}

The Hubble modulated suppression of the scalar kinetic response can be interpreted as an infrared effective manifestation of generally covariant nonlocal gravitational dynamics \cite{DeserWoodard2007,Maggiore2014}. In such frameworks, curvature dependent operators of the schematic form $R\,f(\Box^{-1}R)$ modify the effective gravitational and matter sector dynamics on cosmological backgrounds. On homogeneous FLRW spacetimes, one typically finds $\Box^{-1}R \sim H^{-2}$ at the background level, implying that nonlocal curvature effects track the cosmic expansion rate.

A scalar field coupled to nonlocal curvature through an interaction of the form
\begin{equation}
S_\phi = -\frac{1}{2}\int d^4x \sqrt{-g}\, K(\Box^{-1}R)\,(\nabla\phi)^2 
- \int d^4x \sqrt{-g}\, V(\phi)
\end{equation}
acquires an effective kinetic response that depends on the background expansion history. Expanding around a late time FLRW background and motivates an effective background pressure of the form
\begin{equation}
P_\phi(X,H,\phi) = X - V(\phi) - \beta\,\frac{X^2}{H^2 + X},
\end{equation}
which reproduces the phenomenological structure adopted in Eq.~(\ref{eq:Peff}).

The explicit dependence on the Hubble rate is therefore interpreted as a background level effective parametrization arising from curvature induced renormalization of the scalar kinetic sector. The present analysis is restricted to homogeneous background dynamics and assumes a single effective scalar degree of freedom. The stability results derived in this work do not rely on the detailed ultraviolet completion of the nonlocal theory and follow solely from background energy conservation and positivity of the kinetic response. Consequently, the phantom divide emerges as a stability protected boundary of the cosmological dynamics. A fully covariant effective field theory completion and perturbative analysis are left for future investigation.

\section{Single-Field Phantom Bound from Background Dynamics}\label{sec:Background_Reformulation}

Any continuous phantom crossing in a single scalar cosmology governed by Einstein gravity requires either additional propagating degrees of freedom, higher derivative operators, or violation of ghost free conditions. Theorem~\ref{thm:phantom_nogo} shows that Hubble modulated kinetic suppression dynamically enforces the phantom divide as an invariant boundary of the background flow. To disentangle this dynamical mechanism from purely kinematical stability constraints, it is useful to reformulate the standard single field phantom no-go result in a model independent background framework. The following theorem provides such a reformulation for generic effective single field descriptions without explicit Hubble dependence.

\begin{theorem}[Background single field phantom no-go condition]
\label{thm:universal_stability_PRL}
Consider a spatially homogeneous single scalar degree of freedom minimally coupled to Einstein gravity, with background dynamics described by an effective pressure $P_\phi(X,\phi)$. Assume:  
(i) background energy conservation;  
(ii) ghost free kinetic response, $\partial P_\phi/\partial X \ge 0$ and $\partial \rho_\phi/\partial X > 0$;  
and (iii) absence of additional propagating degrees of freedom beyond the scalar mode.  

Then any continuous cosmological evolution satisfies $w_\phi \equiv P_\phi/\rho_\phi \ge -1$, with equality attained only if $X\,\partial P_\phi/\partial X = 0$. A continuous crossing into the phantom regime necessarily violates condition (ii), signaling either the onset of a ghost instability or the breakdown of the effective single field description.
\end{theorem}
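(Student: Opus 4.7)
The plan is to reduce the statement to an essentially algebraic consequence of the single-field identity $\rho_\phi+P_\phi=2X\,P_{\phi,X}$, supplemented by an intermediate value argument to forbid continuous crossing of $w_\phi=-1$. Unlike Theorem~\ref{thm:phantom_nogo}, no dynamical attractor structure or explicit Hubble modulation is needed, so the proof is kinematic in character and relies only on the sign structure dictated by assumption (ii) together with the rigidity of the single-field stress-energy afforded by assumption (iii).

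First, I would derive the identity $\rho_\phi+P_\phi=2X\,P_{\phi,X}$ directly from the single-field definition $\rho_\phi=2X\,P_{\phi,X}-P_\phi$. This identity holds pointwise on any background trajectory, and assumption (iii) guarantees that no additional propagating mode contributes to the homogeneous stress-energy, so the identity is not augmented by hidden terms. Since $X=\dot\phi^{2}/2\ge 0$ by construction and assumption (ii) enforces $P_{\phi,X}\ge 0$, the product $2X\,P_{\phi,X}$ is nonnegative, yielding $\rho_\phi+P_\phi\ge 0$. On physical trajectories with $\rho_\phi>0$ this translates immediately to $w_\phi\ge -1$, with equality holding if and only if $X\,P_{\phi,X}=0$.

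Next, I would rule out continuous phantom crossing by contradiction. Suppose a smooth cosmological trajectory carried $w_\phi$ from a value above $-1$ at some time $t_1$ to a value below $-1$ at a later time $t_2$. Continuity of $P_\phi$ and $\rho_\phi$ along the flow guarantees an intermediate instant $t_\star$ at which $\rho_\phi+P_\phi=0$, and if $X(t_\star)>0$ the identity forces $P_{\phi,X}(t_\star)=0$. To enter the region $\rho_\phi+P_\phi<0$ immediately after $t_\star$ would then require $P_{\phi,X}<0$ in a right-neighborhood of $t_\star$, violating assumption (ii) on the ghost-free domain. The degenerate case $X(t_\star)=0$ is handled separately by noting that the field is instantaneously at rest and that $\rho_\phi+P_\phi$ cannot change sign without $P_{\phi,X}$ becoming negative on an open set, again contradicting (ii).

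The main obstacle I anticipate is the careful treatment of the locus $X=0$, where the identity is automatically saturated and yields no direct information about the sign of $w_\phi+1$. This will be resolved by exploiting continuity of $P_{\phi,X}$ in a neighborhood of the candidate crossing instant and applying assumption (ii) pointwise on the admissible phase space. Beyond this technical point, the argument is a short algebraic corollary; the role of assumption (iii) is precisely to maintain the canonical single-field identity, and any relaxation of it (multi-field, higher-derivative, or non-minimally coupled extensions) opens up auxiliary contributions to $\rho_\phi+P_\phi$ that can evade the bound, consistent with the known loopholes in the no-go literature.
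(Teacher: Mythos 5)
Your proof is correct, and it rests on the same key identity as the paper's, namely $\rho_\phi+P_\phi=2X\,P_{\phi,X}\ge 0$ on the ghost-free domain, but it reaches the conclusion by a somewhat different route. The paper channels the argument through the Raychaudhuri equation, writing $\dot H=-\tfrac{1}{2}(\rho_\phi+P_\phi)=-X\,P_{\phi,X}\le 0$ and then identifying phantom crossing with super-acceleration $\dot H>0$, which would force $P_{\phi,X}<0$. You instead work purely with the sign of $\rho_\phi+P_\phi$ and an intermediate value argument at the candidate crossing instant, never invoking the gravitational field equations. Your version is more elementary and in one respect more general: it does not require the scalar to be the sole source of $\dot H$, so the bound on $w_\phi$ applies to the scalar's own equation of state even in the presence of other matter components, whereas the paper's $\dot H=-\tfrac{1}{2}(\rho_\phi+P_\phi)$ implicitly assumes scalar domination. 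The paper's route buys a direct kinematical interpretation (no super-acceleration in the ghost-free domain), which connects to the standard characterization of phantom behavior. One caveat common to both arguments, which you flag more explicitly than the paper does: the step from $\rho_\phi+P_\phi\ge 0$ to $w_\phi\ge -1$ requires $\rho_\phi>0$, an assumption left implicit in the theorem statement; your explicit restriction to trajectories with positive energy density is the right way to handle it. Your separate treatment of the degenerate locus $X=0$ is also sound and fills a point the paper's proof passes over silently.
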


\begin{proof}
For a spatially flat FLRW background, the Raychaudhuri equation implies
\begin{equation}
\dot H = -\frac{1}{2}(\rho_\phi + P_\phi),
\end{equation}
where units $8\pi G=1$ are assumed. Using the identity $\rho_\phi + P_\phi = 2X P_{\phi,X}$ and assumption (ii), one obtains
\begin{equation}
\dot H = - X P_{\phi,X} \le 0
\end{equation}
for all ghost free configurations. A continuous transition into a phantom regime requires $\dot H > 0$, which implies $P_{\phi,X} < 0$ and thus violates the ghost free condition. Therefore, continuous ghost free cosmological evolution cannot cross the phantom divide $w_\phi=-1$ within a single field effective description.
\end{proof}

This result reproduces the standard single field no-go theorem for phantom crossing \cite{Vikman2005}, but is derived here using a background dynamical systems formulation rather than perturbative effective field theory arguments. The reformulation allows the dynamical mechanism of Theorem~\ref{thm:phantom_nogo} to be interpreted as a specific realization of the general single field stability constraint.

\section{Observational Illustration}\label{sec:observational_illustration}

The background expansion history implied by the stability protected framework is compared with late time distance measurements from the Pantheon+ Type~Ia supernova compilation \cite{Scolnic2022}. The effective dark energy EoS is identified with $w_\phi$, and the background equations are solved numerically for the Hubble expansion rate and scalar sector, imposing the ghost free condition $\mathcal{M}>0$ throughout the evolution.

\begin{figure}[t]
\centering
\includegraphics[width=\columnwidth]{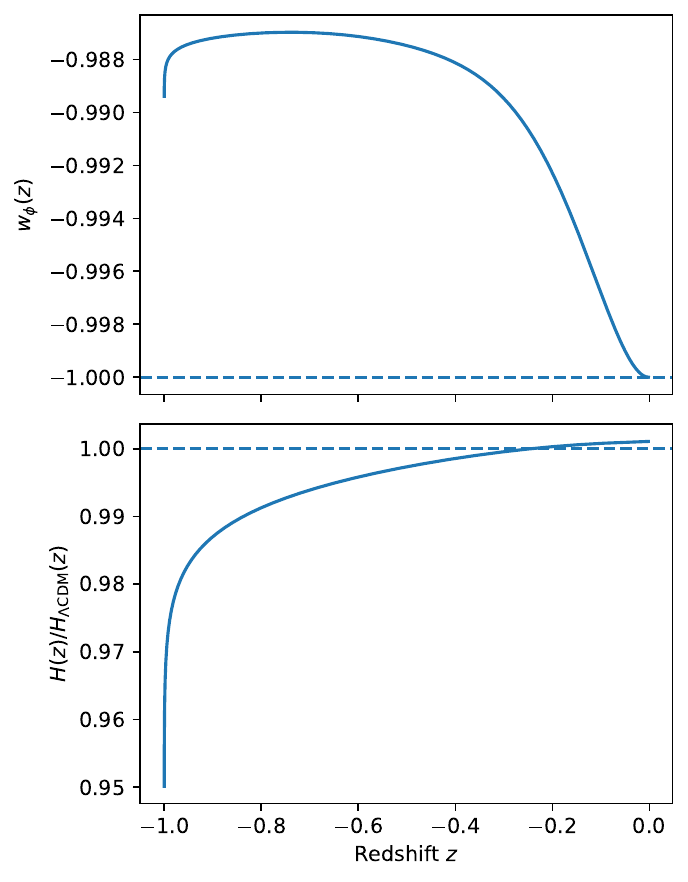}
\caption{Redshift evolution of the EoS $w_{\phi}(z)$ (top panel) and the ratio $H(z)/H_{\Lambda{\rm CDM}}(z)$ (bottom panel). Dashed lines indicate the $\Lambda$CDM reference values. The EoS approaches $w_{\phi}=-1$ at low redshift without crossing, while the expansion history remains close to the $\Lambda$CDM prediction.}
\label{fig:weff_hratio}
\end{figure}

The resulting expansion histories remain close to $\Lambda$CDM while allowing mild redshift evolution of $w_{\phi}$. Figure~\ref{fig:weff_hratio} shows convergence toward $w_{\phi}=-1$ at low redshift without phantom crossing, with $H(z)$ differing only marginally from the $\Lambda$CDM background. The kinetic response remains positive throughout the evolution, indicating that the absence of phantom crossing follows from the stability conditions of the effective framework rather than being imposed by observational priors.

\section{Linear Perturbations and Stability}\label{sec:perturbations}

Perturbative consistency requires the absence of ghost and gradient instabilities. Ghost freedom is ensured by the positivity of the kinetic response function,
\begin{equation}
\mathcal{M} \equiv \frac{\partial \rho_{\phi}}{\partial X} > 0,
\end{equation}
as defined in Eq.~(\ref{eq:ghost}).

For the Hubble modulated effective pressure $P_{\phi}(X,H,\phi)$ defined in Eq.~(\ref{eq:Peff}), scalar perturbations are characterized at the background level by an effective sound speed
\begin{equation}
c_s^2 = \frac{P_{\phi,X}}{P_{\phi,X} + 2X P_{\phi,XX}},
\end{equation}
where
\begin{equation}
P_{\phi,X}
= 1 - \beta\,\frac{2XH^{2}+X^{2}}{(H^{2}+X)^{2}},
\qquad
P_{\phi,XX}
= -2\beta\,\frac{H^{4}}{(H^{2}+X)^{3}} .
\end{equation}
Gradient stability requires $c_s^2>0$, which is satisfied within the ghost free phase space for appropriate parameter choices. In the late time attractor regime $X\to 0$ and $w_{\phi}\to -1$, one finds $P_{\phi,X}\to 1$ and $P_{\phi,XX}\to -2\beta/H^{2}$, implying $c_s^2\to 1$ .

Since the effective pressure depends explicitly on the background expansion rate $H$, the above sound speed should be interpreted as a background level effective propagation speed. A fully covariant perturbative treatment, including construction of the quadratic action for scalar fluctuations and verification of the kinetic and gradient coefficients, is beyond the scope of the present work and left for future investigation.

On sub-Hubble scales and assuming that dark energy perturbations are negligible, matter density perturbations obey
\begin{equation}
\ddot{\delta}_m + 2H\dot{\delta}_m - 4\pi G \rho_m \delta_m = 0,
\end{equation}
so that deviations from $\Lambda$CDM arise solely through the modified background expansion history. The effects of clustering dark energy and metric perturbations are deferred to future work.

\section{Conclusion}\label{sec:conclusion}

This work investigated late time cosmic acceleration in scalar field cosmologies with a Hubble modulated kinetic response within the framework of Einstein gravity. The analysis was performed at the homogeneous background level, assuming a single effective scalar degree of freedom and standard Friedmann dynamics. The Hubble dependent kinetic suppression provides a mechanism for dynamically reducing the scalar kinetic energy at late times, leading to accelerated expansion without introducing additional propagating degrees of freedom or modifying the gravitational sector.

A central result is the establishment of a stability protected phantom bound. By imposing ghost freedom, encoded in the positivity of the kinetic response function, the phantom divide $w_\phi=-1$ was shown to correspond to an invariant manifold of the background dynamical system. Within the ghost free phase space, the combination $\rho_\phi + P_\phi = 2X P_{\phi,X}$ remains non-negative, implying that continuous cosmological evolution cannot cross into the phantom regime. The dynamics instead drive the system asymptotically toward a de~Sitter like attractor with $w_\phi \to -1$, without requiring fine tuning of the scalar potential. This provides a dynamical realization of the well known single field phantom no-go theorem using a background invariant manifold formulation.

The model was further interpreted as an effective infrared parametrization of possible nonlocal gravitational dynamics, where curvature induced renormalization of the scalar kinetic sector naturally introduces Hubble dependent kinetic suppression. While a fully covariant derivation from a fundamental nonlocal theory was not constructed, the effective framework captures the qualitative late time behavior expected from curvature modulated scalar dynamics.

An observational illustration using late time expansion histories demonstrates that the model can closely mimic $\Lambda$CDM while allowing mild redshift evolution of the EoS. The convergence toward $w_\phi=-1$ at low redshift arises dynamically from stability conditions rather than being imposed phenomenologically. This suggests that the observed proximity of dark energy to a cosmological constant, as indicated by recent datasets such as DESI DR2, is consistent with stability driven scalar dynamics rather than necessarily requiring phantom physics. Phantom crossing remains possible in multi-field models or in theories with higher derivative or non-minimal couplings, but it is excluded in the class of single field effective descriptions considered here under the assumptions of background energy conservation and ghost freedom.

The perturbative analysis was restricted to a background level effective description. The explicit Hubble dependence of the scalar pressure implies that a fully covariant perturbation theory requires constructing the quadratic action for fluctuations in an effective field theory framework. The effective sound speed discussed here should therefore be regarded as a phenomenological diagnostic rather than a prediction of a fundamental covariant scalar tensor theory. A systematic treatment of scalar and metric perturbations, including clustering dark energy and implications for structure formation, is left for future work.

Several directions for further investigation remain open. A covariant effective field theory completion that reproduces the Hubble modulated kinetic structure at the background level would provide a fundamental theoretical underpinning of the model. A full parameter inference using supernova, BAO, and CMB data, including Bayesian model comparison with $\Lambda$CDM, would quantify the observational viability of the framework. Finally, the impact of kinetic suppression on cosmic growth, weak lensing, and the $H_0$ and $S_8$ tensions warrants detailed exploration.

Overall, the results indicate that stability constraints in effective scalar field cosmologies naturally drive the universe toward a cosmological constant like state, providing a dynamical explanation for the observed proximity of dark energy to $w=-1$ without invoking phantom instabilities or modifications of gravity.

\bibliographystyle{unsrt}
\bibliography{sample}

\end{document}